\documentclass[envcountsame]{llncs}
\usepackage[T1]{fontenc}
\usepackage{ltl}
\usepackage{comment}

\usepackage{xspace}
\usepackage{xcolor,stmaryrd}
\usepackage{scalerel,multicol}
\usepackage{amsmath}
\usepackage{mathtools}            
\usepackage{tabularx}
\usepackage{multirow}
\usepackage{makecell}
\usepackage{thmtools, comment, url} 
\usepackage{ifthen}
\usepackage{booktabs, todonotes}
\usepackage{hyperref}
\usepackage{multicol}
\usepackage{microtype}
\usepackage[misc,geometry]{ifsym}

\usepackage[mathlines]{lineno}
\usepackage{etoolbox}

%% Patch 'normal' math environments:

%% Patch AMS math environments:

%% Definition of \linenomathAMS depends on whether the mathlines option is provided
%\expandafter\ifx\linenomath\linenomathWithnumbers
%  \let\linenomathAMS\linenomathWithnumbers
  %% The following line gets rid of an extra line numbers at the bottom:
 % \patchcmd\linenomathAMS{\advance\postdisplaypenalty\linenopenalty}{}{}{}
%\else
%  \let\linenomathAMS\linenomathNonumbers
%\fi

%\linenomathpatch{equation}
%\linenomathpatchAMS{gather}
%\linenomathpatchAMS{multline}
%\linenomathpatchAMS{align}
%\linenomathpatchAMS{alignat}
%\linenomathpatchAMS{flalign}

% Disable line numbering during measurement step of multline
%\makeatletter
%\patchcmd{\mmeasure@}{\measuring@true}{
%  \measuring@true
%  \ifnum-\linenopenaltypar>\interdisplaylinepenalty
%    \advance\interdisplaylinepenalty-\linenopenalty
%  \fi
%  }{}{}
%\makeatother

%\linenumbers

\DeclareMathOperator\dep{\mathrm{dep}}

%\newtheorem{theorem}{Theorem}
%\newtheorem{lemma}[theorem]{Lemma}
%\newtheorem{proposition}[theorem]{Proposition}
%\newtheorem{corollary}[theorem]{Corollary}
%\newtheorem{remark}[theorem]{Remark}
%\newtheorem{example}[theorem]{Example}
%\newtheorem{definition}[theorem]{Definition}
%\newproof{proof}{Proof}

%\nolinenumbers 

%\newcommand{\problemFont}[1]{\mahtsf{#1}}

\newcommand{\LTL}{\protect\ensuremath{\mathrm{LTL}}\xspace}
\newcommand{\CTL}{\protect\ensuremath{\mathrm{CTL}}\xspace}
\newcommand{\QPTL}{\protect\ensuremath{\mathrm{QPTL}}\xspace}

\newcommand{\teamltl}{\protect\ensuremath{\mathrm{TeamLTL}}\xspace}

\newcommand{\hyctl}{\protect\ensuremath{\mathrm{HyperCTL^*}}\xspace}
\newcommand{\hyltl}{\protect\ensuremath{\mathrm{HyperLTL}}\xspace}
\newcommand{\ovuhyltl}{\protect\ensuremath{\forall\mathrm{HyperLTL}}\xspace}

\newcommand{\HQPTLP}{\protect\ensuremath{\mathrm{HyperQPTL\textsuperscript{\hskip-2pt\small +}}}\xspace}
\newcommand{\HQPTL}{\protect\ensuremath{\mathrm{HyperQPTL}}\xspace}
\newcommand{\FOE}{\protect\ensuremath{\mathrm{FO[\leq,E]}}\xspace}
\newcommand{\FOo}{\protect\ensuremath{\mathrm{FO[\leq]}}\xspace}

%\newcommand{\SO}{\protect\ensuremath{\mathrm{SO}}\xspace}

% Ops
\newcommand{\dfn}{\mathrel{\mathop:}=}
\newcommand{\ddfn}{\mathrel{\mathop{{\mathop:}{\mathop:}}}=}
\newcommand{\lmodels}{\mathrel{\models^l}}

%%Maths

\newcommand{\pow}[1]{2^{#1}}

\newcommand{\BC}[1]{\mathrm{BC}( #1 )}
\newcommand{\PBC}[1]{\mathrm{PBC}( #1 )}

%Shorthand formulae

%%TODO

%%LTL

%\ifdefined\U \renewcommand{\U}{\LTLuntil} \else \newcommand{\U}{\LTLuntil} \fi
\newcommand{\U}{\LTLuntil}
%\ifdefined\W \renewcommand{\W}{\LTLweakuntil} \else \newcommand{\W}{\LTLweakuntil} \fi

\newcommand{\X}{\LTLnext}
%\newcommand{\G}{\LTLglobally}
%\ifdefined\G \renewcommand{\G}{\LTLg} \else \newcommand{\G}{\LTLg} \fi
\newcommand{\G}{\LTLg}
\newcommand{\F}{\LTLf}

%%HyperLTL

% Relations

\newcommand{\N}{\mathbb N}
\newcommand{\ap}{\mathrm{AP}}

% Classes and Reductions

\newcommand{\NP}{\protect\ensuremath{\mathrm{NP}}}

\newcommand{\PL}{\protect\ensuremath{\mathrm{PL}}}
\newcommand{\Ptime}{\protect\ensuremath{\mathrm{P}}}

% Martins stuff to be sorted
%\newcommand{\pow}[1]{2^{#1}}

%\newcommand{\set}[1]{\{#1\}}
\renewcommand{\phi}{\varphi}
\renewcommand{\epsilon}{\varepsilon}

\newcommand{\clor}{\varovee}

\DeclareMathOperator*{\Clor}{\scalerel*{\ovee}{\sum}}

\newcommand{\cneg}{\sim}

%\newcommand{\existsu}{\underaccent{u}{\exists}}
%\newcommand{\quantu}{\underaccent{\bar}{Q}}

% Problemdef

%\newcommand{\dom}{\mathrm{Dom}}

%LICS 2022 paperista
%%%%%%%%%%%%

\newcommand{\teamctl}{\protect\ensuremath{\mathrm{Team}\CTL}}

\date{\today}

\begin{document}
%\allowdisplaybreaks

\title{A Remark on the Expressivity of Asynchronous 
    TeamLTL and HyperLTL}
\author{Juha Kontinen\inst{1}\orcidID{0000-0003-0115-5154}, 
    Max Sandström\Letter\inst{1,2}\orcidID{0000-0002-6365-2562}, 
    Jonni Virtema\inst{1,2}\orcidID{0000-0002-1582-3718}
    }
\authorrunning{J. Kontinen et al.}
\institute{Department of Mathematics and Statistics, 
    University of Helsinki,
    Helsinki,
    Finland
    \email{\{juha.kontinen, max.sandstrom\}@helsinki.fi}
    \and
    Department of Computer Science, 
    University of Sheffield, 
    Sheffield,
    UK
    \email{j.t.virtema@sheffield.ac.uk}}

\maketitle

\begin{abstract}
Linear temporal logic (LTL) is used in system verification
to write formal specifications for reactive systems.
However, some relevant properties, 
e.g. non-inference in information flow security,
cannot be expressed in LTL.
A class of such properties that has recently 
received ample attention 
is known as hyperproperties.
There are two major streams in the research regarding 
capturing hyperproperties, namely
hyperlogics, which extend LTL with trace quantifiers (HyperLTL),
and logics that employ team semantics, 
extending truth to sets of traces.
In this article we explore the relation between asynchronous 
LTL under set-based 
team semantics (TeamLTL) and HyperLTL.
In particular we consider the extensions of TeamLTL with
the Boolean disjunction and a fragment of the extension of
TeamLTL with the Boolean negation, where the negation
cannot occur in the left-hand side of the Until-operator or 
within the Global-operator.
We show that TeamLTL extended with 
the Boolean disjunction is equi-expressive with
the positive Boolean closure of HyperLTL restricted to 
one universal quantifier,
while the left-downward closed fragment of 
TeamLTL extended with the Boolean negation
is expressively equivalent with 
the Boolean closure of HyperLTL restricted to 
one universal quantifier.

\keywords{Hyperproperties  \and 
    Temporal Logic \and 
    Team Semantics \and
    HyperLTL \and
    Verification}
\end{abstract}

\section{Introduction}
In 1977 Amir Pnueli~\cite{pnueli} introduced
a core concept in verification of reactive and 
concurrent systems: model checking of formulae of
linear temporal logic (\LTL). 
The idea is to view the accepting executions of 
the system as a set of infinite sequences, called traces,
and check whether this set satisfies specifications
expressed in \LTL.
The properties that can be checked by observing
every execution of the system in isolation are called
\emph{trace properties}.
An oft-cited example of a trace property is \emph{termination},
which states that a system terminates if each of its computations
terminates.
Classical \LTL is fit for 
the verification of such propositional trace properties,
however some properties relevant in, for instance, 
information flow security are not trace properties.
These properties profoundly speak of relations between traces.
Clarkson and Schneider \cite{clarkson} coined the term 
\emph{hyperproperties} to refer to such properties 
that lie beyond what \LTL can express.
\emph{Bounded termination} is an easy to grasp example of
a hyperproperty: 
whether every computation of 
a system terminates within some bound common for all traces, 
cannot be determined by looking at traces in isolation.
In information flow security, dependencies between 
public observable outputs and
secret inputs constitute possible security breaches; 
checking for hyperproperties
becomes invaluable.
Two well-known examples of hyperproperties from this field are
noninterference~
\cite{DBLP:conf/sp/Roscoe95,DBLP:journals/jcs/McLean92},
where a high-level user cannot affect 
what low-level users see, 
and observational determinism~\cite{DBLP:conf/csfw/ZdancewicM03},
meaning that if two computations are in the same state 
according to a low-level observer,
then the executions will be indistinguishable.
However, hyperproperties are not 
limited to information flow security;
examples from different fields 
include distributivity and 
other system properties such as 
fault tolerance~\cite{DBLP:journals/acta/FinkbeinerHLST20}.
%\looseness=-1

Given this background, several approaches to 
formally specifying hyperproperties
have been proposed since 2010, 
with families of logics emerging from these approaches.
The two major streams in the research regarding 
capturing hyperproperties are
\emph{hyperlogics} and logics that employ \emph{team semantics}.
In the hyperlogics approach, 
logics that capture trace properties 
are extended with trace quantification, 
extending logics such as \LTL, computation tree logic (\CTL) or 
quantified propositional temporal logic (\QPTL), 
into \hyltl~\cite{DBLP:conf/post/ClarksonFKMRS14}, 
\hyctl~\cite{DBLP:conf/post/ClarksonFKMRS14}, 
and \HQPTL~\cite{MarkusThesis,DBLP:conf/lics/CoenenFHH19}, 
respectively.
An alternative approach is to lift the semantics
of the temporal logics from being 
defined on traces to sets of traces,
by using what is known as team semantics.
This approach yields logics such as 
\teamltl \cite{kmvz18,GMOV22} and
\teamctl\cite{KrebsMV15,GMOV22}.
Since its conception, \teamltl has been considered in 
two distinct variants:
a synchronous semantics, 
where the team of traces agrees on the time step
of occurrence when evaluating temporal operators;
and an asynchronous semantics, 
where the temporal operators are evaluated
independently on each trace.
An example that illustrates the difference between 
these two semantics is
the aforementioned termination and 
bound termination pair of properties.
If we write $\F$ for the future-operator and 
$\mathrm{terminate}$ for
a proposition symbol representing the trace terminating, 
we can write the formula $\F \mathrm{terminate}$,
which under the synchronous semantics expresses the hyperproperty
``bounded termination'',
while under the asynchronous semantics 
the same formula defines the trace property
``termination''.
Not only is the above formulation of 
bounded termination clear and concise,
it also illuminates a key difference between 
hyperlogics and team logics:
while each formula of hyperlogic has 
a fixed number of quantifiers, 
which restricts the number of traces that can be referred to
in a formula, which restricts the number of traces
between which dependencies can be characterised by formulae,
team logics have the ability to refer to 
an unbounded number of traces, even an infinite collection.

One of the original motivations behind 
team semantics \cite{vaananen07}
was to enable the definition of novel atomic formulae,
and this is another important defining feature of 
team temporal logics
as well.
Among these atoms the \emph{dependence atom} 
$\dep(\bar x,\bar y)$ and 
\emph{inclusion atom} 
$\bar x \subseteq \bar y$ 
stand out as the most influential. 
They respectively state that 
the variables $\bar y$ are 
functionally dependent on
the variables $\bar x$, 
and that the values of 
the variables $\bar x$
also occur among the values of variables $\bar y$.
As an example of the use of the inclusion atom, 
let the proposition symbols $o_1,\dots, o_n$ 
denote public observable bits and 
assume that the proposition symbol $s$ 
is a secret bit. 
The atomic formula 
$(o_1,\dots o_n, s) \subseteq (o_1,\dots o_n, \neg s)$ 
expresses a form of non-inference by stating that an observer 
cannot infer the value of the confidential bit from the outputs.

While the expressivity of \hyltl and 
other hyperlogics has been studied extensively, 
where the many extensions of \teamltl 
lie in relation the hyperlogics is still not 
completely understood. 
The connections for the logics without extensions were already
established in Krebs et al.~\cite{kmvz18}, 
where they showed that synchronous \teamltl and 
\hyltl are expressively incomparable and that 
the asynchronous variant collapses to $\LTL$.
With regards to the expressivity of synchronous semantics, 
Virtema et al.~\cite{VBHKF20} 
showed that the extensions of \teamltl 
can be translated to \HQPTLP, 
which in turn extends \hyltl with (non-uniform) 
quantification of propositions. 
Relating the logics to the first-order context, 
Kontinen and Sandstr\"om~\cite{KS21} defined 
Kamp-style translations from 
extensions of both semantics of \teamltl to 
the three-variable fragment of first-order team logic. 
%Finally, Gutsfeld et al.~\cite{GMOV22} reimagined the setting of 
%temporal team semantics to be able to model richer forms of 
%(a)synchronicity by developing the notion of time-evaluation 
%functions.
%In addition to reimagining the framework, they discovered 
%decidable logics which however relied on 
%restraining time-evaluation functions to be either 
%\emph{$k$-context-bounded} or \emph{$k$-synchronous}. 
It is worth noting that recently asynchronous hyperlogics 
have been considered  also in several other articles 
(see, e.g.,  \cite{GutsfeldMO21,BaumeisterCBFS21}).
An example of the significant rift between 
asynchronous and synchronous $\teamltl$ 
is that the asynchronous semantics is essentially
a first-order logic, 
while the synchronous semantics has 
second-order aspects.
Especially the set-based variant of asynchronous $\teamltl$
can be translated, using techniques in \cite{KS21}, 
into first-order logic under team semantics, 
which is known to be first-order logic \cite{vaananen07}.
Similarly, \hyltl is equally expressive as 
the guarded fragment of 
first-order logic with the equal level predicate,
as was shown by Finkbeiner and Zimmermann~\cite{Finkbeiner017}.

In this article we focus on exploring 
the connections between fragments of \hyltl and 
extensions of \teamltl.
The set-based asynchronous semantics that we consider here 
was defined in Kontinen et al.~\cite{KontinenSV23} in order to 
further study the complexity of 
the model checking problem for these logics.
Prior to that, the literature on temporal team semantics 
employed a semantics based on multisets of traces. 
In the wider team semantics literature, 
this often carries the name \emph{strict semantics}, 
in contrast to \emph{lax semantics} which is de facto a 
set-based semantics.
This relaxation of the semantics enabled 
the definition of normal forms for the logics, 
which we use in this article to explore the connection 
with \hyltl.

\textbf{Our contribution.}
We show correspondences in expressivity between 
the set-based variant of 
linear temporal logic under asynchronous team semantics and
fragments of the Boolean closure of \hyltl. 
In particular we show that \LTL under 
team semantics with the Boolean disjunction, 
$\teamltl(\clor)$, is equi-expressive with
the positive Boolean closure of \hyltl restricted to 
only one universal quantifier,
while the left downward closed fragment of $\teamltl(\sim)$
is equi-expressive with 
the Boolean closure of \hyltl restricted to 
one universal quantifier. 

\section{Preliminaries}
We begin by defining 
the variant of $\teamltl$ and 
its extensions,
as in \cite{KontinenSV23}.

Let $\ap$ be a set of \emph{atomic propositions}. 
The formulae of \LTL (over $\ap$) is attained by 
the grammar:
\[
	\varphi \ddfn p \mid
				 \neg p  \mid
				 \varphi \lor \varphi \mid
				 \varphi\land \varphi \mid 
				 \X \varphi \mid 
				 \G \varphi \mid 
				 \varphi \U \varphi , 
				 %\quad\quad  \text{where $p \in \ap$.}
\]
where $p \in \ap$.
We follow the convention that 
all formulae of \teamltl are given in 
negation normal form, 
where $\neg$ is only allowed before atomic propositions,
as is customary when dealing with team semantics. 

We will consider the extensions of \teamltl with 
the Boolean disjunction $\clor$, denoted $\teamltl(\clor)$,
and Boolean negation $\sim$, denoted $\teamltl(\sim)$.

A \emph{trace} $t$ over $\ap$ is an 
infinite sequence of sets of proposition symbols from 
$(\pow{\ap})^\omega$. 
Given a natural number $i\in\N$, we denote by 
$t[i]$ the $(i+1)$th element of $t$ and by 
$t[i,\infty]$ the suffix $(t[j])_{j\geq i}$ of $t$.
We call a set of traces a \emph{team}.

We write 
$\mathcal{P}(\mathbb{N})^+$ 
to denote 
$\mathcal{P}(\mathbb{N})\setminus \{\emptyset\}$. 
For a team $T\subseteq (\pow{\ap})^\omega$ a function 
$f\colon T\rightarrow \mathcal{P}(\mathbb{N})^+$, 
we set 
$T[f,\infty]\dfn\{t[s,\infty]\mid t\in T,s\in f(t)\}$.
For $T'\subseteq T$, $
f \colon T \to \mathcal{P}(\mathbb{N})^+$, 
and $f' \colon T' \to \mathcal{P}(\mathbb{N})^+$, 
we define that $f'< f$ if and only if
\begin{align*}
    \forall t \in T': 
	    &\min(f'(t)) \leq \min(f(t)) \text{ and, }\\
	    &\text{ if }\max(f(t))
        \text{ exists, }\max(f'(t)) < \max(f(t)).
\end{align*}

\begin{definition}[$\teamltl$]
Let $T$ be a team, and $\varphi$ and $\psi$ 
\teamltl-formulae. 
The lax semantics is defined as follows. 
    \begin{align*}
        & T\models l & 
            & \Leftrightarrow & 
            & t\models l \text{ for all } t\in T, 
	              \text{ where $l\in 
                \{p,\neg p \mid p\in \mathrm{AP}\}$} \\
        & & & & & \text{ is a literal and ``} t\models 
                \text{'' refers to \LTL-satisfaction} \\ 
        & T\models \varphi\wedge\psi &
            & \Leftrightarrow &
            & T\models\varphi \text{ and } T\models\psi \\
        & T \models\varphi\vee\psi & 
            & \Leftrightarrow & 
            & \exists T_1,T_2 \text{ s.t. }  
                T_1\cup T_2=T\text{ and }  
	            T_1\models\varphi\text{ and }
                T_2\models\psi \\
        & T\models \X\varphi & 
            & \Leftrightarrow & 
            & T[1,\infty]\models \varphi \\    
        & T\models \G\varphi & 
            & \Leftrightarrow & 
            & \forall f\colon T\rightarrow 
                \mathcal{P}(\mathbb{N})^+ 
	            \text{ it holds that } 
                T[f,\infty] \models\varphi \\  
%& T\models^l \F\varphi & & \Leftrightarrow & & 
%	\exists f\colon T\rightarrow \mathcal{P}(\mathbb{N})^+ 
%	\text{ such that } T[f,\infty]\models^l\varphi\\
    & T\models\varphi \U\psi & & \Leftrightarrow & & 
	   \exists f\colon T\rightarrow \mathcal{P}(\mathbb{N})^+  
	   \text{ such that } T[f,\infty] \models\psi\mbox{ and } \\
    & & & & & \forall f'\colon T'\rightarrow \mathcal{P}(\mathbb{N})^+ 
	   \text{s.t. $f' < f$, it holds that } 
	   T'[f',\infty] \models\varphi \\
    & & & & & \text{ or } T'= \emptyset, \text{ where }T'\dfn
      \{t \in T \mid \max(f(t))\neq 0  \} 
    \end{align*}
The semantics for the Boolean disjunction and Boolean negation, 
used in the extensions $\teamltl(\clor)$ and $\teamltl(\sim)$,
are given by:
    \begin{align*}
        & T\models \varphi\clor\psi & 
            & \Leftrightarrow & 
            & T\models \varphi \text{ or } T\models \psi \\
        & T\models \sim\varphi & 
            & \Leftrightarrow & 
            & T\not\models \varphi 
    \end{align*}
\end{definition}

Note that the Boolean disjunction is 
definable in $\teamltl(\sim)$,
as the dual of conjunction, 
i.e. $T \lmodels \varphi\clor\psi$ 
if and only if 
$T \lmodels \sim(\sim\varphi \wedge \sim\psi)$.

Two important properties of team logics are 
\emph{flatness} and \emph{downward closure}.
A logic has the flatness property if 
$T \lmodels \varphi$
if and only if 
$\{ t \} \lmodels \varphi$ for all $t\in T$,
holds for all formulae $\varphi$ of the logic.
A logic is downward closed if for
all formulae $\varphi$ of the logic
if $T\lmodels \varphi$ and $S \subseteq T$
then $S \lmodels \varphi$.
The following Proposition was proven in \cite{KontinenSV23}.
\begin{proposition}
    $\teamltl^l$ has both the flatness and 
    the downward closure properties, while 
    $\teamltl^l(\clor)$ only has the downward 
    closure property.
\end{proposition}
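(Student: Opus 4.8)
The plan is to prove flatness of $\teamltl^l$ by structural induction, deduce its downward closure as an immediate corollary, prove downward closure of $\teamltl^l(\clor)$ by a separate structural induction, and finally refute flatness of $\teamltl^l(\clor)$ with a concrete two-trace team. Throughout I would use the elementary fact that flatness implies downward closure: if $T\lmodels\varphi$, then $\{t\}\lmodels\varphi$ for every $t\in T$, hence for every $t$ in any subteam $S\subseteq T$, so $S\lmodels\varphi$ by flatness again. Consequently, while carrying out the flatness induction, the downward closure of all proper subformulae is available for free from the induction hypothesis.

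For the flatness induction, the base case of literals is immediate since $T\lmodels l$ is \emph{defined} as ``$t\models l$ for all $t\in T$'' (and the empty team vacuously satisfies every formula, so the case $T=\emptyset$ is trivial), and the $\wedge$ case is trivial. For the lax disjunction $\vee$, note that a split $T_1\cup T_2=\{t\}$ of a singleton must have $T_1=\{t\}$ or $T_2=\{t\}$, so $\{t\}\lmodels\varphi\vee\psi$ iff $\{t\}\lmodels\varphi$ or $\{t\}\lmodels\psi$; for a team $T$ one then takes $T_1=\{t\in T\mid\{t\}\lmodels\varphi\}$ and $T_2=\{t\in T\mid\{t\}\lmodels\psi\}$ and invokes the induction hypothesis on these subteams. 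The case $\X\varphi$ reduces directly to $\varphi$ on the team $T[1,\infty]=\{t[1,\infty]\mid t\in T\}$. For $\G\varphi$: if $T\lmodels\G\varphi$, $t\in T$, and $g\colon\{t\}\to\mathcal{P}(\mathbb{N})^+$, extend $g$ to some $f\colon T\to\mathcal{P}(\mathbb{N})^+$; then $\{t\}[g,\infty]\subseteq T[f,\infty]\lmodels\varphi$, so $\{t\}[g,\infty]\lmodels\varphi$ by downward closure of $\varphi$; conversely, if each $\{t\}\lmodels\G\varphi$ and $f\colon T\to\mathcal{P}(\mathbb{N})^+$ is given, then every $s\in T[f,\infty]$ equals $t[i,\infty]$ for some $t\in T$ and $i\in f(t)$, and the choice $g(t)=\{i\}$ witnesses $\{s\}\lmodels\varphi$, so $T[f,\infty]\lmodels\varphi$ by flatness of $\varphi$.

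The step I expect to be the main obstacle is $\varphi\U\psi$, because of the $\exists f\,\forall f'$ alternation in its semantics together with the side condition on $T'=\{t\in T\mid\max(f(t))\neq 0\}$. For the direction $T\lmodels\varphi\U\psi\Rightarrow\{t\}\lmodels\varphi\U\psi$, I would use $f|_{\{t\}}$ as the witness for the singleton; the delicate point is that a competing function $g'$ on the associated primed set must be extended to some $f'<f$ on all of $T'$, and this is always possible because for $t'\in T'$ one has $\max(f(t'))\neq 0$, so the constant value $\{0\}$ is a legal strictly-smaller choice at $t'$; then $\{t\}'[g',\infty]\subseteq T'[f',\infty]\lmodels\varphi$, and downward closure of $\varphi$ finishes this direction. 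For the converse, from witnesses $f_t$ for the singletons I would glue $f(t):=f_t(t)$; then $T[f,\infty]=\bigcup_{t\in T}\{t\}[f_t,\infty]\lmodels\psi$ by flatness of $\psi$, and for any $f'<f$ on $T'$ and any $s=t[i,\infty]\in T'[f',\infty]$ one checks that $f'|_{\{t\}}<f_t$ --- here it is crucial to keep the \emph{whole} set $f'(t)$, not merely the index $i$, so that the $\min$ and $\max$ comparisons are inherited directly from $f'<f$ --- whence $\{t\}[f'|_{\{t\}},\infty]\lmodels\varphi$ and therefore $\{s\}\lmodels\varphi$; flatness of $\varphi$ then yields $T'[f',\infty]\lmodels\varphi$. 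Downward closure of $\teamltl^l$ is then immediate from flatness.

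For downward closure of $\teamltl^l(\clor)$, since flatness genuinely fails one argues directly, by a separate structural induction that reuses the same function manipulations: for $\G\varphi$ one extends a function on the subteam to the whole team, and for $\varphi\U\psi$ one restricts the witness $f$ to the subteam and extends competing functions via the $\{0\}$-trick exactly as above; the new connective is trivial, since $T\lmodels\varphi\clor\psi$ means $T\lmodels\varphi$ or $T\lmodels\psi$, a disjunction preserved under passing to subteams by the induction hypothesis, and the cases for literals, $\wedge$, $\vee$ and $\X$ are routine. Finally, to refute flatness of $\teamltl^l(\clor)$, fix an atomic proposition $p$, take $\varphi:=p\clor\neg p$, and let $T=\{t_1,t_2\}$ with $p\in t_1[0]$ and $p\notin t_2[0]$: every singleton satisfies $\varphi$, whereas $T$ satisfies neither $p$ nor $\neg p$ and hence does not satisfy $p\clor\neg p$; thus $\teamltl^l(\clor)$ is not flat.
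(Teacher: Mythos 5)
Your proof is correct as far as I can check it against the semantics given here, but note that the paper itself does not prove this proposition at all --- it is imported verbatim from \cite{KontinenSV23} --- so your self-contained structural induction is necessarily a different (and more informative) route than what appears in this text. The delicate points are exactly the ones you flag, and you resolve them correctly: running the induction so that downward closure of subformulae is available for free from flatness; the $\{0\}$-extension trick, which works because every $t'\in T'$ has $\max(f(t'))\neq 0$, so $\min(\{0\})\leq\min(f(t'))$ always holds and $0<\max(f(t'))$ whenever the maximum exists, making the extended function legally $<f$; and, in the converse direction for $\varphi\U\psi$, comparing the \emph{restriction} $f'|_{\{t\}}$ with the singleton witness $f_t$ rather than passing to a single index $i\in f'(t)$, since replacing $f'(t)$ by $\{i\}$ could violate the condition $\min\leq\min(f(t))$. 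The gluing $f(t):=f_t(t)$ for the converse, the separate downward-closure induction for $\teamltl^l(\clor)$ (where flatness is unavailable), and the counterexample $p\clor\neg p$ on a two-trace team with $p\in t_1[0]$, $p\notin t_2[0]$ are all standard and sound; the latter is precisely the kind of witness that also underlies the paper's later observation that $\mathrm{PBC}(\forall\mathrm{HyperLTL})$ is not flat. One very minor caveat: your claim that the empty team satisfies every formula deserves the one-line inductive justification (it holds for these $\sim$-free logics because the empty function witnesses $\U$ and $T'=\emptyset$), but this does not affect the argument.
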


We consider the \emph{left-downward closed} fragment of
$\teamltl^l(\sim)$, denoted $\text{left-dc--}\teamltl^l(\sim)$,
where every subformula of the form $\G\psi$ or $\psi\U\theta$, 
the subformula $\psi$ is a $\teamltl(\clor)$-formula

It was established in \cite{KontinenSV23} that any formula of 
$\teamltl^l(\clor)$ can be equivalently expressed in 
\emph{$\clor$-disjunctive normal form}, i.e. in the form
\[
    \Clor_{i\in I} \alpha_i, 
\]
where $\alpha_i$ are \LTL-formulae.

Similarly by \cite{KontinenSV23}, 
every formula of $\text{left-dc--}\teamltl^l(\sim)$
can be equivalently stated in \emph{quasi-flat normal form},
which means in the form
\[ \Clor_{i\in I}(\alpha_i\wedge 
	\bigwedge_{j\in J_i} \exists \beta_{i,j}),  \]
where $\alpha_i$ and  $\beta_{i,j}$ are \LTL-formulae, 
and $\exists \beta_{i,j}$ is an abbreviation for 
the formula $\sim \beta^d_{i,j}$, 
where $\beta^d_{i,j}$ is 
the formula obtained from $\neg \beta$, 
after $\neg$ has been pushed down to the atomic level.

Next we state the syntax and semantics of \hyltl, 
as defined in \cite{DBLP:conf/post/ClarksonFKMRS14}, 
as well as the Boolean closure concepts we are concerned with.

\begin{definition}[Syntax of \hyltl]
Let $\ap$ be a set of propositional variables and 
$\mathcal{V}$ the set of all trace variables. 
Formulas of \hyltl are generated by the following grammar:
    \begin{alignat*}{10}
        \psi 
            &\ddfn& 
            &\exists\pi.\psi & 
            &\mid& 
            &\forall\pi.\psi &  
            &\mid& 
            &\varphi & 
            & & 
            & & 
            & & \\
        \varphi 
            &\ddfn& 
            &a_\pi&  
            &\mid& 
            &\neg\varphi &
            &\mid& 
            &\varphi\vee\varphi & 
            &\mid& 
            &\X\varphi &  
            &\mid& 
            &\varphi\U\varphi,
    \end{alignat*}
    where $a \in \ap$ and $\pi \in\mathcal{V}$. 
\end{definition}

We denote the set of all traces by $\mathrm{TR}$ and 
the set of all trace variables by $\mathcal{V}$. 
For a trace assignment function 
$\Pi\colon \mathcal{V} \to \mathrm{TR}$, 
we write $\Pi[i,\infty]$ for 
the trace assignment defined through 
$\Pi[i,\infty] = \Pi(\pi)[i,\infty]$, 
and $\Pi[\pi \mapsto t]$ for the assignment that 
assigns $t$ to $\pi$, 
but otherwise is identical to $\Pi$.

\begin{definition}[Semantics of \hyltl]
    Let $a \in \ap$ be a proposition symbol,
    $\pi \in \mathcal{V}$ be a trace variable,
    $T$ be a set of traces, and 
    let $\Pi\colon \mathcal{V} \to \mathrm{TR}$ be 
    a trace assignment.

    \begin{align*}
        &\Pi \models_T \exists\pi.\psi& 
            &\Leftrightarrow& 
            &\text{there exists } t\in T \colon 
                \Pi[\pi\mapsto t] \models_T \psi \\
        &\Pi \models_T \forall\pi.\psi& 
            &\Leftrightarrow& 
            &\text{for all } 
                t\in T \colon \Pi[\pi\mapsto t] \models_T \psi \\
        &\Pi \models_T a_\pi& 
            &\Leftrightarrow& 
            &a\in \Pi(\pi)[0] \\
        &\Pi \models_T \neg\varphi& 
            &\Leftrightarrow& 
            &\Pi\not\models_T \varphi \\
        &\Pi \models_T \varphi_1\vee\varphi_2& 
            &\Leftrightarrow& 
            &\Pi \models_T \varphi_1 \text{ or } 
                \Pi \models_T \varphi_2 \\
        &\Pi \models_T \X\varphi& 
            &\Leftrightarrow& 
            &\Pi[1,\infty] \models_T \varphi \\
        &\Pi \models_T \varphi_1\U\varphi_2& 
            &\Leftrightarrow& 
            &\text{there exists } i\geq 0\colon 
                \Pi[i,\infty] \models_T \varphi_2 \\ 
        & & 
            & & 
            &\quad\text{ and for all } 0\leq j < i 
                \text{ we have }
                \Pi[j,\infty] \models_T \varphi_1
    \end{align*}
\end{definition}

\begin{definition}[Universal Fragments]
    The \textit{universal fragment} of \hyltl, 
    denoted by $\forall^*\hyltl$, 
    is the fragment of \hyltl with 
    no existential quantification. 
    We write $\ovuhyltl$ for 
    the \textit{one variable universal fragment} of \hyltl,
    and $Q\hyltl$ for the \textit{one variable fragment} of \hyltl.
\end{definition}

\begin{definition}[(Positive) Boolean Closure]
    The \textit{Boolean closure} of a logic $\mathcal{L}$, 
    denoted by $\BC{\mathcal{L}}$, is 
    the extension of $\mathcal{L}$ that is 
    closed under $\wedge$, $\vee$ and $\neg$. 
    The \textit{positive Boolean closure} of 
    a logic $\mathcal{L}$, 
    denoted by $\PBC{\mathcal{L}}$, is 
    the extension of $\mathcal{L}$ that is closed under 
    $\wedge$ and $\vee$.
\end{definition}

The semantics for the Boolean closures are attained by relaxing 
the definition of conjunction $\wedge$, 
disjunction $\vee$, and $\neg$ to 
apply to any formula of the Boolean closure.

Using a suitable algorithm, 
all $\BC{\mathcal{L}}$-formulae can be 
equivalently expressed in disjunctive normal form, 
i.e. as a disjunction of conjunctions 
with possibly a negation in 
front of each formula of $\mathcal{L}$. 
Similarly, all $\PBC{\mathcal{L}}$-formulae 
can be equivalently expressed as 
\[
    \bigvee_{i\in I}\bigwedge_{j \in J} \varphi_{i,j}
\]
for some formulae $\varphi_{i,j} \in \mathcal{L}$ and 
index sets $I$ and $J$. 
From here on we use $I$ and $J$ to denote arbitrary index sets.

\section{Correspondence between $\teamltl$ and \hyltl}

In this section we will explore 
the relationship between the logics by 
proving some correspondence theorems. 
First, however, we prove some pertinent propositions regarding 
the Boolean closure of \hyltl,
showing that conjunction, disjunction and 
negation distribute over
the quantifiers in a manner analogous to first-order logic.
We go through these propositions in some detail as,
although they appear familiar from the first-order setting,
\hyltl is usually considered only in the prenex normal form
and thus these basic results are not explicitly addressed in
the literature.
Moreover, the proofs feature arguments that
will be useful in subsequent proofs.

%\Jonni{Some text about the results. I.e. why we proof these.}

As usual, for logics $\mathcal{L}$ and $\mathcal{L}'$, 
we write $\mathcal{L}\leq \mathcal{L}'$, 
if for every $\mathcal{L}$-formula 
there exists an equivalent $\mathcal{L}'$-formula. 
We write $\mathcal{L}\equiv \mathcal{L}'$, 
if both $\mathcal{L}\leq \mathcal{L}'$ and 
$\mathcal{L}'\leq \mathcal{L}$.
\begin{proposition}\label{prop:pbcforallhyltl=hyltl}
    $\PBC{\forall^*\hyltl} \equiv \forall^*\hyltl$
\end{proposition}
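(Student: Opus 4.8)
The plan is to prove both inclusions; $\forall^*\hyltl \leq \PBC{\forall^*\hyltl}$ is immediate since every $\forall^*\hyltl$-formula already lies in $\PBC{\forall^*\hyltl}$, so the content is the converse. I would start from the normal form for positive Boolean closures recalled above: every $\PBC{\forall^*\hyltl}$-formula is equivalent to one of the form $\bigvee_{i\in I}\bigwedge_{j\in J}\Phi_{i,j}$ with each $\Phi_{i,j}\in\forall^*\hyltl$. Hence it suffices to show that $\forall^*\hyltl$ is closed, up to logical equivalence, under binary conjunction and binary disjunction, and then to iterate over the (finite) index sets.

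The heart of the argument is a pair of quantifier-shifting equivalences mirroring the familiar first-order laws. Let $\Phi = \forall\pi_1\cdots\forall\pi_n.\varphi$ and $\Psi = \forall\rho_1\cdots\forall\rho_m.\psi$ be $\forall^*\hyltl$-formulae with $\varphi,\psi$ quantifier-free, and, after renaming bound trace variables, assume the tuples $\bar\pi$ and $\bar\rho$ are disjoint, so that $\varphi$ mentions no variable from $\bar\rho$ and $\psi$ none from $\bar\pi$. Abbreviating the quantifier blocks by $\forall\bar\pi$ and $\forall\bar\rho$, I would establish
\[
  \Phi\wedge\Psi \;\equiv\; \forall\bar\pi\,\forall\bar\rho.\,(\varphi\wedge\psi)
  \qquad\text{and}\qquad
  \Phi\vee\Psi \;\equiv\; \forall\bar\pi\,\forall\bar\rho.\,(\varphi\vee\psi),
\]
noting that $\wedge$ is expressible at the quantifier-free level through $\neg$ and $\vee$, so both right-hand sides are genuine $\forall^*\hyltl$-formulae. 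The conjunction equivalence follows by unwinding the semantics of the leading quantifier blocks: each side asserts that for every assignment of $\bar\pi$ and $\bar\rho$ to traces of $T$ both $\varphi$ and $\psi$ hold, where we use that the truth of $\varphi$ (resp.\ $\psi$) depends only on the restriction of the assignment to $\bar\pi$ (resp.\ $\bar\rho$). For the disjunction equivalence the left-to-right direction is immediate; for the converse I would argue by cases: if $\varphi$ holds under every assignment of $\bar\pi$, then $\Phi$ holds; otherwise, fixing one assignment of $\bar\pi$ under which $\varphi$ fails, the disjunct $\psi$ must hold under every assignment of $\bar\rho$, and since $\psi$ ignores the $\bar\pi$-part of the assignment, this is exactly $\Psi$. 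The proof then closes by iteration: applying the conjunction law repeatedly (with fresh renamings at each step) collapses each inner conjunction $\bigwedge_{j}\Phi_{i,j}$ into a single formula $\forall\bar\sigma_i.\chi_i\in\forall^*\hyltl$, and applying the disjunction law repeatedly collapses $\bigvee_i\forall\bar\sigma_i.\chi_i$ into $\forall\bar\sigma_1\cdots\forall\bar\sigma_k.\bigvee_i\chi_i$, a single $\forall^*\hyltl$-formula equivalent to the original.

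I expect the only genuine subtlety to be the disjunction law $\Phi\vee\Psi\equiv\forall\bar\pi\,\forall\bar\rho.(\varphi\vee\psi)$: in first-order logic a universal quantifier does not distribute over disjunction, and what rescues the equivalence here is precisely that the two quantifier blocks range over disjoint variable tuples with neither quantifier-free matrix referring to the other's variables. The plan is therefore to be careful that the renaming of bound trace variables genuinely enforces this separation, and to record that the case analysis above remains valid when $T$ is empty, in which situation both sides hold vacuously. The remaining bookkeeping --- keeping the renamings consistent while iterating over $I$ and $J$ --- is routine and I would spell it out only to the extent needed.
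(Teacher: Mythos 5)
Your proposal is correct and follows essentially the same route as the paper: both reduce to the $\bigvee\bigwedge$ normal form and then prenex, pushing the universal block past the conjunction and, after renaming to disjoint trace variables, past the disjunction as well. The only difference is presentational --- you phrase it as binary quantifier-shifting laws applied iteratively and spell out the case analysis justifying $\forall\bar\pi\,\varphi \vee \forall\bar\rho\,\psi \equiv \forall\bar\pi\forall\bar\rho(\varphi\vee\psi)$, a step the paper asserts after renaming without detailed argument.
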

\begin{proof}
    Let $\bigvee_{i \in I} \bigwedge_{j \in J} \psi_{i,j}$ 
    be an arbitrary formula of 
    $\PBC{\forall^*\hyltl}$. 
    If all $\psi_{i,j}$ are quantifier free, 
    we are done, as then 
    $\bigvee_{i \in I} \bigwedge_{j \in J} \psi_{i,j}$ is a 
    $\forall^*\hyltl$-formula. 
    Thus, we may assume that 
    $\psi_{i,j} = \forall\pi_1 \cdots 
        \forall \pi_n\varphi_{i,j}$ 
    for some $\LTL$-formula $\varphi_{i,j}$.
    Suppose 
    \[
        \Pi \models_T \bigvee_{i \in I} 
            \bigwedge_{j \in J} \forall \pi_1 
            \cdots \forall \pi_n\varphi_{i,j}.
    \] 
    Without loss of generality, 
    we may assume a uniform quantifier block in each conjunct, 
    as one can rename variables and take the largest quantifier
    block as the common one, 
    since redundant quantifiers do not effect evaluation.
    The previous is therefore equivalent with
    \[
        \Pi \models_T \bigvee_{i \in I} \forall \pi_1 \cdots 
            \forall \pi_n \bigwedge_{j \in J} \varphi_{i,j}.
    \]
    At this point, 
    we wish to push the disjunction past the quantifier block,
    but the variables would become entangled and 
    different traces could satisfy different disjuncts.
    We need to distinguish the variables of 
    the disjuncts from each other, 
    so we rename the trace quantifiers. 
    The previous evaluation is therefore equivalent with
    \[
        \Pi \models_T \forall \pi_1^1 \cdots 
            \forall \pi_1^i \cdots
            \forall \pi_n^1 \cdots \forall \pi_n^i 
            \bigvee_{i \in I} \bigwedge_{j \in J} 
            \varphi_{i,j}(\pi_1^1,\cdots,\pi_n^i).
    \]
    This is a formula of $\forall^*\hyltl$. \qed
\end{proof}

The following remark, familiar from first-order logics,
can be proven with a straight-forward induction over 
the length of the quantifier block.

\begin{remark}\label{rem:neghyltl}
For \hyltl-formula $Q_1\pi_1 \cdots Q_n\pi_n \psi$ it holds that
    \[
        \neg Q_1\pi_1 \cdots Q_n\pi_n \psi \equiv 
            Q_1^-\pi_1 \cdots Q_n^-\pi_n \neg\psi,
    \]
    where for every index $i$, 
    $Q_i$ are quantifiers $\forall$ or $\exists$, 
    and $Q_i^-$ is $\exists$ if $Q_i$ is $\forall$ 
    and vice versa.
\end{remark}
%\begin{proof}
%    Proof by induction over the length of the quantifier block, $n$. 
%    Let $\psi$ be an arbitrary \LTL-formula.
%
%    Base case: $n = 1$. Consider the formula $\neg Q_1\pi_1 \psi$. 
%    Suppose $Q_1$ is a universal quantifier. 
%    Now by the semantics of the Boolean closure of \hyltl,
%    \begin{align*}
%        \Pi \models_T \neg\forall\pi_1 \psi
%        &\iff
%        \Pi \not\models_T \forall\pi_1 \psi \\
%        &\iff
%        \text{ there is } t \in T \text{ such that }
%            \Pi[\pi_1 \mapsto t] \not\models_T \psi \\
%        &\iff
%        \text{ there is } t \in T \text{ such that }
%            \Pi[\pi_1 \mapsto t] \models_T \neg\psi \\
%        &\iff
%            \Pi \models_T \exists\pi_1 \neg\psi
%    \end{align*}
%    The case for the existential quantifier is analogous.
%
%    Induction case: 
%    Assume that the claim holds for all quantifier blocks of size $i$. 
%    Consider the formula $Q_1\pi_1 \cdots Q_{i+1}\pi_{i+1}\psi$. 
%    By an argument similar to the base case, 
%    and the induction hypothesis, the following equivalences hold:
%    \begin{align*}
%        \Pi \models_T \neg Q_1\pi_1 \cdots Q_{i+1}\pi_{i+1} \psi
%        &\iff
%        \Pi \models_T Q^-_1\pi_1 \neg Q_2\pi_2 \cdots 
%            Q_{i+1}\pi_{i+1} \psi \\
%        &\iff
%        \Pi \models_T Q^-_1\pi_1 \cdots Q^-_{i+1}\pi_{i+1} \neg\psi.
%    \end{align*} \qed
%\end{proof}

\begin{proposition}
    $\BC{\hyltl} \equiv \hyltl$
\end{proposition}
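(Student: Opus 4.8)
The plan is to prove the two inclusions separately. The inclusion $\hyltl \leq \BC{\hyltl}$ is immediate, since every $\hyltl$-formula is already a $\BC{\hyltl}$-formula. For the converse $\BC{\hyltl}\leq\hyltl$ the strategy is prenexing, exactly as in classical first-order logic, assembled from the preparatory material already in hand: the disjunctive normal form for $\BC{\mathcal{L}}$, Remark~\ref{rem:neghyltl} for pushing negations inward, and the variable-renaming argument from the proof of Proposition~\ref{prop:pbcforallhyltl=hyltl} for pulling quantifiers past $\vee$ (and, dually, past $\wedge$).

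Concretely, I would start from an arbitrary $\BC{\hyltl}$-formula, put it in disjunctive normal form $\bigvee_{i\in I}\bigwedge_{j\in J}\chi_{i,j}$ where each $\chi_{i,j}$ is a $\hyltl$-formula or the negation of one, and write every $\hyltl$-subformula in prenex form $Q_1\pi_1\cdots Q_n\pi_n\varphi$ with $\varphi$ an $\LTL$-matrix. By Remark~\ref{rem:neghyltl}, a negated prenex formula $\neg Q_1\pi_1\cdots Q_n\pi_n\varphi$ is equivalent to $Q_1^-\pi_1\cdots Q_n^-\pi_n\neg\varphi$, and pushing $\neg$ down to the atomic level inside the $\LTL$-matrix turns this back into an unnegated prenex $\hyltl$-formula; hence we may assume the formula is a positive Boolean combination $\bigvee_{i\in I}\bigwedge_{j\in J}\psi_{i,j}$ of prenex $\hyltl$-formulae. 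Next I would prenex each conjunct $\bigwedge_{j\in J}\psi_{i,j}$ in turn: after renaming bound trace variables so that the quantifier blocks of the distinct $\psi_{i,j}$ are pairwise disjoint and disjoint from the free variables, the blocks may be concatenated (in any order) in front of $\bigwedge_{j\in J}\varphi_{i,j}$ by the usual prenexing equivalences, yielding an equivalent prenex $\hyltl$-formula $\psi_i$. Finally, just as in the proof of Proposition~\ref{prop:pbcforallhyltl=hyltl}, I would rename the bound variables of the $\psi_i$ so as to be pairwise disjoint and then pull all their quantifier blocks to the front of the disjunction $\bigvee_{i\in I}\psi_i$; the renaming is precisely what makes it legitimate to move $\forall$ past $\vee$, which is the delicate case and is handled in detail there. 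The result is a single block of quantifiers over a Boolean combination of $\LTL$-matrices, which is itself an $\LTL$-formula, hence a prenex $\hyltl$-formula, and this proves $\BC{\hyltl}\leq\hyltl$.

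The only genuine subtlety — and therefore the main obstacle — is the bookkeeping of the side conditions in the prenexing steps: after each renaming one must check that no trace variable bound in one disjunct (respectively conjunct) occurs free in another, since this is exactly what licenses moving $\forall$ past $\vee$ and $\exists$ past $\wedge$. This is the content already isolated in Proposition~\ref{prop:pbcforallhyltl=hyltl}, so here one merely iterates that argument; if the index sets are allowed to be infinite, one should additionally remark that infinitely many fresh trace variables are available, so the renamings can still be carried out.
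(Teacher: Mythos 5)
Your proposal is correct and follows essentially the same route as the paper's own proof: disjunctive normal form, Remark~\ref{rem:neghyltl} to replace each negated prenex formula by a positive one with dualised quantifiers and the negation pushed into the $\LTL$-matrix, and then the variable-renaming/prenexing argument of Proposition~\ref{prop:pbcforallhyltl=hyltl} to pull all quantifier blocks in front of the Boolean combination. Your write-up merely spells out the bookkeeping (disjointness of renamed variables, order of extraction over $\wedge$ and $\vee$) that the paper compresses into ``by a similar argument''.
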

\begin{proof}
    Consider a $\BC{\hyltl}$-formula 
    $\bigvee_{i \in I}\bigwedge_{j \in J} \varphi_{i,j}$ in 
    disjunctive normal form,
    with either $\varphi_{i,j} \in \hyltl$ or 
    $\varphi_{i,j} = \neg\psi_{i,j}$ 
    for some formula $\psi_{i,j}\in \hyltl$. 
    By Remark \ref{rem:neghyltl} $\neg\psi_{i,j} \equiv         
        Q_1^{i,j}\pi_1^{i,j} \cdots 
        Q_n^{i,j}\pi_n^{i,j} \theta_{i,j}$, 
    where $\theta_{i,j} \in \LTL$. 
    Thus we may assume that $\varphi_{i,j}$ 
    only appears positively. 
    By a similar argument to that of the proof of Proposition 
    \ref{prop:pbcforallhyltl=hyltl} we get the following:
    \[
        \bigvee_{i \in I} \bigwedge_{j \in J} 
            Q_{1}^{i,j}\pi_1^{i,j} \cdots 
            Q_{n}^{i,j}\pi_n^{i,j} \psi_{i,j}
            \equiv
            Q_{1}^{1,1} \pi_1^{1,1} \cdots 
            Q_{1}^{1,j} \pi_1^{1,j} \cdots 
            Q_{n}^{i,j} \pi_n^{i,j} 
            \bigvee_{i \in I}\bigwedge_{j \in J} \psi_{i,j}.
    \] \qed
\end{proof}

One last remark before we get to 
the core results of this article, 
this time relating quantifier-free \hyltl-formulae with
\LTL-formulae. 
The remark can again be proven by induction on 
the structure of the formula.

\begin{remark}\label{rem:hyperified}
    Let $T$ be a team, $\Pi$ be a trace assignment, 
    $\pi$ be a trace variable,
    $\varphi$ be a \LTL-formula, 
    and let $\varphi(\pi)$ be 
    the \hyltl formula identical to $\varphi$,
    except every proposition symbol $p$ is replaced by $p_\pi$.
    Suppose $\Pi(\pi) = t$ for some $t \in T$. 
    Now the following equivalence holds
    \[
        \Pi\models_T \varphi(\pi) \iff t \models \varphi.
    \]
\end{remark}

Using the above propositions we may now proceed with proving
our main results:
correspondence theorems between team logics and 
the Boolean closures of hyperlogics.

%\Jonni{FO ylaraja? Yhden muuttujan universaali on varmaan sama kun LTL (flat) ja Boolen sulkeuma on aidosti vahvempi (ei ole flat).}

Note that $\teamltl$ has no separation between 
closed and open formulae, 
and has no features to encode trace assignments. 
Thus, when $\varphi$ is a formula of some 
team based logic $\mathcal{L}$ and 
$\psi$ is a formula of a hyper logic $\mathcal{L}'$ 
without free variables, 
we say that $\varphi$ and $\psi$ are equivalent, 
if the equivalence 
$T \models \varphi \Leftrightarrow \emptyset \models_T \psi$, 
holds for all sets of traces $T$.
The notations $\mathcal{L}\leq \mathcal{L}'$ and 
$\mathcal{L}\equiv \mathcal{L}'$ 
are then defined in the obvious way, 
by restricting $\mathcal{L}'$ to 
formulae without free variables.
\begin{theorem}\label{thm:teamltl=pbcovuhyltl}
    $\teamltl^l(\clor) \equiv \PBC{\ovuhyltl}$
\end{theorem}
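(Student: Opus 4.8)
The plan is to establish the two inclusions separately, exploiting the normal forms already available in the excerpt. For the direction $\teamltl^l(\clor) \leq \PBC{\ovuhyltl}$, I would start from the $\clor$-disjunctive normal form of \cite{KontinenSV23}: every $\teamltl^l(\clor)$-formula is equivalent to $\Clor_{i\in I}\alpha_i$ for \LTL-formulae $\alpha_i$. By the lax semantics of $\clor$ this says $T\models\Clor_i\alpha_i$ iff $T\models\alpha_i$ for some $i$, and since \LTL-formulae are flat, $T\models\alpha_i$ iff $t\models\alpha_i$ for every $t\in T$, i.e.\ iff $\emptyset\models_T\forall\pi.\,\alpha_i(\pi)$ using Remark~\ref{rem:hyperified}. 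Hence the whole formula is equivalent to $\bigvee_{i\in I}\forall\pi.\,\alpha_i(\pi)$, which is a $\PBC{\ovuhyltl}$-formula. (If $I$ is infinite one must check that the normal form is genuinely finite, or that an infinite disjunction is allowed on the \hyltl side; the cited normal-form result should be invoked to keep $I$ finite.)

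For the converse $\PBC{\ovuhyltl} \leq \teamltl^l(\clor)$, I would first put an arbitrary $\PBC{\ovuhyltl}$-formula into the normal form $\bigvee_{i\in I}\bigwedge_{j\in J}\psi_{i,j}$ with each $\psi_{i,j}\in\ovuhyltl$, and then use Proposition~\ref{prop:pbcforallhyltl=hyltl} (restricted to one variable) to collapse each disjunct $\bigwedge_j\psi_{i,j}$ to a single formula $\forall\pi.\,\varphi_i(\pi)$ with $\varphi_i\in\LTL$ — here one should note the one-variable case of that proposition introduces only one renamed copy per conjunct but, crucially, since there is a single quantified variable the conjunction $\bigwedge_j\varphi_{i,j}$ already lies under a single $\forall\pi$, so no entanglement occurs and $\varphi_i \dfn \bigwedge_j\varphi_{i,j}$ works directly. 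Then $\emptyset\models_T\forall\pi.\,\varphi_i(\pi)$ iff $t\models\varphi_i$ for all $t\in T$ iff (by flatness) $T\models\varphi_i$, so the whole formula translates to $\Clor_{i\in I}\varphi_i$, a $\teamltl^l(\clor)$-formula.

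The main obstacle, and the step deserving the most care, is the bridge between the team-semantic satisfaction $T\models\varphi$ for an \LTL-formula $\varphi$ and the \hyltl-satisfaction $\emptyset\models_T\forall\pi.\,\varphi(\pi)$. This rests on two facts that must be cited or verified explicitly: the flatness of plain \LTL under lax team semantics (Proposition~1 in the excerpt), and Remark~\ref{rem:hyperified}, which converts a single-trace \LTL-evaluation into the evaluation of the ``hyperified'' formula $\varphi(\pi)$ under an assignment sending $\pi$ to that trace. Combining them: $\emptyset\models_T\forall\pi.\,\varphi(\pi)$ iff for all $t\in T$, $\{\pi\mapsto t\}\models_T\varphi(\pi)$ iff for all $t\in T$, $t\models\varphi$ iff $T\models\varphi$. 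I would also flag the degenerate case $T=\emptyset$: there $\forall\pi.\,\varphi(\pi)$ is vacuously satisfied and $\emptyset\models\varphi$ holds for \LTL-formulae in negation normal form, so the equivalence survives, and disjunctions/conjunctions of such formulae behave as expected because the Boolean connectives of $\PBC{}$ and the $\clor$/$\wedge$ of \teamltl match semantically on this base layer. Assembling these observations with the two normal forms gives the theorem.
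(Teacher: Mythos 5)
Your proposal is correct and follows essentially the same route as the paper: the $\clor$-disjunctive normal form plus flatness and the hyperification remark (Remark~\ref{rem:hyperified}) for one direction, and for the converse the normal form for $\PBC{\ovuhyltl}$ with the conjunction pulled under a single $\forall\pi$ (no renaming needed, exactly as in the paper's appeal to the argument of Proposition~\ref{prop:pbcforallhyltl=hyltl}) followed by flatness to land on $\Clor_{i\in I}\bigwedge_{j\in J}\varphi_{i,j}$. Your extra remarks on the finiteness of $I$ and the empty-team case are harmless refinements of the same argument.
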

\begin{proof}
    Let $T$ be an arbitrary team and $\varphi$ 
    an arbitrary $\teamltl^l(\clor)$-formula.
    By \cite[Theorem 10]{KontinenSV23}, 
    we may assume that $\varphi$ is in 
    the form $\Clor_{i\in I} \alpha_i$, 
    where $I$ in an index set and 
    $\alpha_i$ are \LTL-formulae.
    We let $\alpha_i(\pi)$ denote 
    the \hyltl-formulae obtained from $\alpha_i$, 
    by replacing every proposition symbol $p$ by 
    $p_\pi$. 
    We obtain the following chain of equivalences:
    \begin{align*}
        T \models \Clor_{i\in I} \alpha_i
        &\iff 
        \text{there is } i \in I \text{ such that } 
            T \models \alpha_i \\
        &\iff
        \text{there is } i \in I \text{ such that } 
            t \models \alpha_i \text{ for all } t\in T \\
        &\iff
        \text{there is } i \in I \text{ such that } 
            \emptyset \models_T \forall\pi \alpha_i(\pi) \\
        &\iff
        \emptyset \models_T \bigvee_{i\in I} \forall\pi\alpha_i(\pi),
    \end{align*}
    where 
    %the first equivalence is due to 
    %the normal form of $\teamltl^l(\clor)$,
    the first equivalence follows from the semantics of $\clor$,
    the second equivalence holds by the flatness of $\alpha_i$,
    the third equivalence is due to 
    the semantics of $\forall$ and 
    Remark \ref{rem:hyperified},
    and the final equivalence follows from 
    the semantics of $\vee$.

    %\Jonni{Reformulate, as above, such that this is clearly the converse direction. We need to start from an arbitrary $\PBC{\ovuhyltl}$-sentence.}
    For the converse direction, consider an arbitrary 
    $\PBC{\ovuhyltl}$-sentence $\psi$.
    As noted above, $\psi$ is equivalent to a sentence
    $\bigvee_{i \in I} \bigwedge_{j \in J} 
        \forall\pi \varphi_{i,j}(\pi)$,
    where $\varphi_{i,j}(\pi)$, for every pair $i$ and $j$, 
    is a \hyltl-formula with $\pi$ as the only free variable. 
%    Now we get the following chain of equivalences:
%    \begin{align*}
%        \emptyset \models_T \bigvee_{i \in I} \bigwedge_{j \in J} \forall\pi 
%            \varphi_{i,j}(\pi)
%        &\iff
%        \emptyset \models_T \bigvee_{i \in I} \forall\pi 
%            \bigwedge_{j \in J} \varphi_{i,j}(\pi) \\
%        &\iff 
%        \text{there is } i \in I \text{ such that } 
%            \emptyset \models_T \forall\pi 
%            \bigwedge_{j \in J} \varphi_{i,j}(\pi) \\
%        &\iff
%        \text{there is } i \in I \text{ such that }
%            \emptyset [\pi \mapsto t] \models_T 
%            \bigwedge_{j \in J} \varphi_{i,j}(\pi) \\
%        &\iff
%        \text{there is } i \in I \text{ such that } 
%            t \models \bigwedge_{j \in J} \varphi_{i,j}
%            \text{ for all } t\in T \\
%        &\iff
%        \text{there is } i \in I \text{ such that } 
%            T \models \bigwedge_{j \in J} \varphi_{i,j} \\
%        &\iff
%        T \models \Clor_{i \in I} \bigwedge_{j \in J} \varphi_{i,j},
%    \end{align*}
%    where the first equivalence holds by 
%    a similar argument to the proof of 
%    Proposition \ref{prop:pbcforallhyltl=hyltl},
%    the second equivalence is due to the semantics of $\vee$,
%    the third equivalence holds by the semantics of $\forall$,
%    the fourth equivalence follows from Remark \ref{rem:hyperified},
%    the fifth equivalence holds by the flatness,
%    and the final equivalence is due to the semantics of $\clor$. \qed
    Now by an argument similar to the proof of 
    Proposition \ref{prop:pbcforallhyltl=hyltl},
    $\emptyset \models_T \bigvee_{i \in I} \bigwedge_{j \in J} 
        \forall\pi \varphi_{i,j}(\pi)$
    if and only if
    $\emptyset \models_T \bigvee_{i \in I} \forall\pi 
        \bigwedge_{j \in J} \varphi_{i,j}(\pi)$.
    Equivalently then by the definition of the semantics of 
    the disjunction, 
    we may fix $i' \in I$ such that 
    $\emptyset \models_T \forall\pi 
        \bigwedge_{j \in J} \varphi_{i',j}(\pi)$.
    By the definition of the universal quantifier then 
    we get that 
    the previous is equivalent with
    $\emptyset [\pi \mapsto t] \models_T 
        \bigwedge_{j \in J} \varphi_{i',j}(\pi)$ for all $t \in T$.
    Now by Remark \ref{rem:hyperified}, 
    the previous holds if and only if
    $t \models \bigwedge_{j \in J} \varphi_{i',j}
        \text{ for all } t\in T$,
    which is equivalent to 
    $T \lmodels \bigwedge_{j \in J} \varphi_{i',j}$,
    due to the flatness property of $\teamltl$.
    Finally, by the semantics of the Boolean disjunction,
    the previous is equivalent with
    $T \models \Clor_{i \in I} \bigwedge_{j \in J} 
        \varphi_{i,j}$. \qed
\end{proof}
As a corollary we get that $\teamltl^l(\clor)$ is subsumed by 
the universal fragment of \hyltl, 
which follows from Theorem \ref{thm:teamltl=pbcovuhyltl} and 
the observations made in the proof of 
Proposition \ref{prop:pbcforallhyltl=hyltl}.

\begin{corollary}\label{cor:teamltl(clor)leqUhyltl}
    $\teamltl^l(\clor) \leq \forall^*\hyltl$
\end{corollary}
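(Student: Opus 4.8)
The plan is to combine Theorem~\ref{thm:teamltl=pbcovuhyltl} with the quantifier manipulations already carried out in the proof of Proposition~\ref{prop:pbcforallhyltl=hyltl}. Given an arbitrary $\teamltl^l(\clor)$-formula, Theorem~\ref{thm:teamltl=pbcovuhyltl} supplies an equivalent $\PBC{\ovuhyltl}$-sentence, and by the normal-form discussion preceding the theorems this sentence can be written as $\bigvee_{i\in I}\bigwedge_{j\in J}\forall\pi\,\varphi_{i,j}(\pi)$, where each $\varphi_{i,j}(\pi)$ is a quantifier-free \hyltl-formula whose only trace variable is $\pi$.

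Next I would repeat, step by step, the rewriting from the proof of Proposition~\ref{prop:pbcforallhyltl=hyltl}. First, pushing the inner conjunction past the single universal quantifier gives the equivalent sentence $\bigvee_{i\in I}\forall\pi\,\bigwedge_{j\in J}\varphi_{i,j}(\pi)$. Then, before moving the outermost disjunction to the front of a common quantifier block, I would rename the bound variable of the $i$-th disjunct to a fresh variable $\pi^i$, obtaining the equivalent sentence
\[
    \forall\pi^1 \cdots \forall\pi^i \cdots
        \bigvee_{i\in I}\bigwedge_{j\in J}\varphi_{i,j}(\pi^i),
\]
with one universal quantifier $\forall\pi^i$ per index $i\in I$. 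This is a $\forall^*\hyltl$-sentence, and hence $\teamltl^l(\clor)\le\forall^*\hyltl$.

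The only delicate point is the renaming step, which is exactly the one already isolated in Proposition~\ref{prop:pbcforallhyltl=hyltl}: without first making the bound variables of the several disjuncts pairwise distinct, pulling the disjunction out past the quantifier would entangle them and change the meaning, since different traces could then witness different disjuncts. Once the variables are disjoint this is harmless, and the correctness of each rewriting follows from the semantics of $\forall$ and $\vee$ together with Remark~\ref{rem:hyperified}. Thus the corollary is immediate from the preceding results, with no genuinely new obstacle.
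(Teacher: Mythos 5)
Your proposal is correct and matches the paper's own argument: the corollary is obtained there exactly as you describe, by combining Theorem~\ref{thm:teamltl=pbcovuhyltl} with the quantifier manipulations (pushing the conjunction inside, renaming bound variables per disjunct, and pulling the disjunction past the resulting universal block) from the proof of Proposition~\ref{prop:pbcforallhyltl=hyltl}. You have simply written out the details that the paper leaves implicit, and they check out.
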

%\begin{proof}
%    By Theorem \ref{thm:teamltl=pbcovuhyltl}, 
%    a formula of $\teamltl^l(\clor)$ is equivalent with 
%    a formula of $\PBC{\ovuhyltl}$. 
%    As described in the proof of 
%    Proposition \ref{prop:pbcforallhyltl=hyltl}, 
%    any formula of $\PBC{\ovuhyltl}$ is equivalent with 
%    a formula of $\forall^*\hyltl$. \qed
%\end{proof}

Note that another consequence of 
Theorem \ref{thm:teamltl=pbcovuhyltl}
is that $\ovuhyltl$ is strictly 
less expressive than $\PBC{\ovuhyltl}$, 
as the former is equivalent with \LTL\cite{Finkbeiner17} 
and thus has the flatness property, 
while the latter is equivalent with $\teamltl(\clor)$,
which does not satisfy flatness. 
This stands in contrast to the unrestricted 
universal fragment $\forall^*\hyltl$, 
which by Proposition \ref{prop:pbcforallhyltl=hyltl} 
is equivalent to its positive Boolean closure.

\begin{theorem}\label{thm:teamltlsim=bcovuhyltl}
    $\text{left-dc--}\teamltl^l(\sim) 
    \equiv 
    \BC{Q\hyltl} \equiv \BC{\ovuhyltl}$
\end{theorem}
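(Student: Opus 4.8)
The statement asserts a three-way equivalence: $\text{left-dc--}\teamltl^l(\sim) \equiv \BC{Q\hyltl} \equiv \BC{\ovuhyltl}$.

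Let me think about what's going on here.

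First, $\BC{Q\hyltl} \equiv \BC{\ovuhyltl}$. Here $Q\hyltl$ is the one-variable fragment of HyperLTL (allowing both $\exists$ and $\forall$ on the single variable), while $\ovuhyltl$ is the one-variable universal fragment. Since there's only one variable, a formula of $Q\hyltl$ is either $\exists\pi\varphi(\pi)$ or $\forall\pi\varphi(\pi)$ (or quantifier-free, but on a team with a fixed assignment... actually with no free variables it must be quantified, or it's just... hmm, actually quantifier-free closed formulas don't exist since every atom $a_\pi$ needs $\pi$). Actually wait — with one variable we could have $\forall\pi\exists\pi\varphi$? No, re-quantifying the same variable. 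Let's say $Q\hyltl$ sentences are $\exists\pi\varphi(\pi)$ or $\forall\pi\varphi(\pi)$ where $\varphi$ is quantifier-free LTL-over-$\pi$. By Remark \ref{rem:neghyltl}, $\neg\forall\pi\varphi \equiv \exists\pi\neg\varphi$ and $\neg\exists\pi\varphi\equiv\forall\pi\neg\varphi$. So in the Boolean closure, $\exists\pi\varphi(\pi)$ can be written as $\neg\forall\pi\neg\varphi(\pi)$, which is a Boolean combination of $\ovuhyltl$-formulas. Hence $\BC{Q\hyltl} = \BC{\ovuhyltl}$. That's the easy equivalence.

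The harder equivalence is $\text{left-dc--}\teamltl^l(\sim) \equiv \BC{\ovuhyltl}$. This mirrors Theorem \ref{thm:teamltl=pbcovuhyltl}. The key tool is the quasi-flat normal form from \cite{KontinenSV23}: every $\text{left-dc--}\teamltl^l(\sim)$-formula is equivalent to $\Clor_{i\in I}(\alpha_i\wedge\bigwedge_{j\in J_i}\exists\beta_{i,j})$ with $\alpha_i,\beta_{i,j}\in\LTL$ and $\exists\beta_{i,j}$ an abbreviation meaning "some trace satisfies $\beta_{i,j}$". On a team $T$: $T\models\alpha_i$ iff every $t\in T$ satisfies $\alpha_i$ (flatness of LTL under team semantics); $T\models\exists\beta_{i,j}$ iff some $t\in T$ satisfies $\beta_{i,j}$. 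So one translates $\alpha_i\wedge\bigwedge_j\exists\beta_{i,j}$ to $\forall\pi\,\alpha_i(\pi)\wedge\bigwedge_j\neg\forall\pi\,\neg\beta_{i,j}(\pi)$, a $\BC{\ovuhyltl}$-formula, and the big $\clor$ becomes $\bigvee$. This gives $\text{left-dc--}\teamltl^l(\sim)\leq\BC{\ovuhyltl}$.

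For the converse, start from a $\BC{\ovuhyltl}$-sentence in disjunctive normal form $\bigvee_i\bigwedge_j L_{i,j}$, where each $L_{i,j}$ is $\forall\pi\varphi_{i,j}(\pi)$ or $\neg\forall\pi\varphi_{i,j}(\pi)\equiv\exists\pi\neg\varphi_{i,j}(\pi)$ — so each conjunct block is a conjunction of universally-quantified and existentially-quantified LTL-over-$\pi$ formulas. Since there is only one trace variable and it is immediately re-bound by each quantifier, one can pull out each $\forall\pi$-conjunct into a single $\forall\pi(\text{conjunction of the LTL matrices})$ (as in the proof of Proposition \ref{prop:pbcforallhyltl=hyltl}: redundant/renamed quantifiers don't affect evaluation, and all the universal matrices over the same variable can be merged under one $\forall\pi$). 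What remains in each disjunct is $\forall\pi\,\alpha_i(\pi)\wedge\bigwedge_j\exists\pi\,\gamma_{i,j}(\pi)$; by Remark \ref{rem:hyperified} these translate back, conjunct by conjunct, into the quasi-flat normal form $\alpha_i\wedge\bigwedge_j\exists\gamma_{i,j}$ interpreted in the team, and the outer $\bigvee$ becomes $\clor$. By \cite{KontinenSV23} this is a $\text{left-dc--}\teamltl^l(\sim)$-formula. The main obstacle — as flagged by the authors before Theorem \ref{thm:teamltl=pbcovuhyltl} — is the bookkeeping around entanglement of the single trace variable across disjuncts and conjuncts: one must argue carefully (as in Proposition \ref{prop:pbcforallhyltl=hyltl}) that existential conjuncts $\exists\pi\,\gamma_{i,j}$ can be kept separate rather than absorbed into a shared quantifier, and that no spurious dependency is introduced when merging universal conjuncts. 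I expect most of the work to be invoking the normal-form theorems of \cite{KontinenSV23} and the distribution lemmas (Proposition \ref{prop:pbcforallhyltl=hyltl}, Remarks \ref{rem:neghyltl} and \ref{rem:hyperified}) already proven, so the proof should be a relatively short assembly of these pieces with attention paid only to the quantifier-reshuffling step.
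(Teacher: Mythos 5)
Your proposal is correct and follows essentially the same route as the paper: both directions go through the quasi-flat normal form of $\text{left-dc--}\teamltl^l(\sim)$, flatness of \LTL-formulae, Remark~\ref{rem:hyperified}, and the merging of universal one-variable conjuncts under a single $\forall\pi$ while keeping existential conjuncts separate, with $\BC{Q\hyltl}\equiv\BC{\ovuhyltl}$ obtained from Remark~\ref{rem:neghyltl} exactly as in the paper. The only cosmetic difference is that you phrase the hyperlogic side with $\neg\forall\pi\neg\beta(\pi)$ where the paper writes $\exists\pi\,\beta(\pi)$, which is immaterial given that remark.
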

\begin{proof}
    Let $\varphi$ be an arbitrary 
    $\text{left-dc--}\teamltl^l(\sim)$-formula.
    Now by the quasi-flat normal form $T \lmodels \varphi$
    if and only if 
    $T \lmodels \Clor_{i \in I} \big( \alpha_i \wedge 
            \bigwedge_{j \in J} \exists \beta_{i,j} \big)$.
    Equivalently, by the semantics of $\clor$, 
    we may fix an index $i' \in I$ such that 
    $T \lmodels \alpha_{i'} \wedge 
            \bigwedge_{j \in J} \exists \beta_{i',j}$.
    By the semantics of the logic,
    flatness, and the interpretation of the shorthand $\exists$,
    the previous evaluation is equivalent with that
    $t \models \alpha_{i'}$ for all $t \in T$ and
    for all $j \in J$ there is a $t_j \in T$ such that
    $t_j \models \beta_{i',j}$.
    By Remark \ref{rem:hyperified} the previous holds 
    if and only if
    $\emptyset[\pi \mapsto t] \models_T \alpha_{i'}(\pi)$ 
    for all $t \in T$
    and for all $j \in J$ there is a $t_j \in T$ such that
    $\emptyset[\pi \mapsto t_j] \models_T \beta_{i',j}(\pi)$.
    Equivalently, by the semantics of $\forall$ and $\exists$,
    we have that
    $\emptyset \models_T \forall\pi \alpha_{i'}(\pi)$ and
    $\emptyset \models_T \bigwedge_{j \in J}
        \exists\pi \beta_{i',j}(\pi)$,
    which, finally by the semantics of $\vee$ and $\wedge$, 
    holds if and only if
    $\emptyset \models_T \bigvee_{i \in I} 
            \big( \forall\pi \alpha_i(\pi) \wedge 
            \bigwedge_{j \in J} 
            \exists\pi \beta_{i,j}(\pi)\big)$.

    On the other hand, 
    let $\psi$ be an arbitrary sentence of $\BC{\hyltl}$.
    Now we get the following chain of equivalences, 
    where $Q_{i,j}\in \{\exists, \forall\}$:
    {\allowdisplaybreaks
    \begin{align*}
        \emptyset \models_T \psi
        & \iff &
        & \emptyset \models_T \bigvee_{i \in I} \big(
            \bigwedge_{j \in J} Q_{i,j}\pi \varphi_{i,j} \big) \\
        & \iff & 
        & \emptyset \models_T \bigvee_{i \in I} \big(\forall\pi 
            \alpha_i \wedge \bigwedge_{j \in J} 
            \exists\pi \varphi_{i,j} \big) \\
        & \iff &
        & \text{there is } i \in I \text{ such that }
            \emptyset \models_T \forall\pi \alpha_i,
            \text{ and for all } j \in J \\
            & & &\text{it holds that } 
            \emptyset \models_T \exists\pi \varphi_{i,j}\\
        & \iff &
        & \text{there is } i \in I \text{ such that } 
            \emptyset[\pi \mapsto t] \models_T \alpha_i 
            \text{ for all } t\in T,
            \text{ and for all } \\
            & & & j \in J \text{ there exists } t_j \in T
            \text{ such that } 
            \emptyset[\pi \mapsto t_j] \models_T
            \varphi_{i,j} \\
        & \iff & 
        & \text{there is } i \in I \text{ such that }
            t \models \alpha_i \text{ for all } t\in T,
            \text{ and for all } j\in J \\
            & & & \text{there exists }
            t_j \in T \text{ such that } 
            t_j \models \varphi_{i,j} \displaybreak \\
        & \iff &
        & \text{there is } i \in I \text{ such that }
            T \models \alpha_i, \text{ and for all } j\in J
            \text{ it holds that } \\
            & & & T\models \exists \varphi_{i,j}\\
        & \iff &
        & \text{there is } i \in I \text{ such that }
            T \models \alpha_i \wedge \bigwedge_{j \in J} 
            \exists \varphi_{i,j} \\
        & \iff &
        & T \models \Clor_{i \in I} \big( \alpha_i \wedge 
            \bigwedge_{j \in J} \exists \varphi_{i,j} \big),
    \end{align*}
    }
    where the first equivalence is due to 
    the normal form for a Boolean closure,
    the second equivalence is holds because 
    the universally quantified conjuncts can 
    equivalently be evaluated simultaneously,
    the third equivalence follows from 
    the semantics of $\wedge$ and $\vee$,
    the fourth equivalence holds by 
    the semantics of $\forall$ and $\exists$,
    the fifth equivalence holds by 
    Remark \ref{rem:hyperified},
    the sixth equivalences is due to flatness and 
    the definition of the shorthand $\exists$,
    the seventh equivalence holds by 
    the semantics of $\wedge$, 
    and the final equivalence follows from 
    the semantics of $\clor$. 

    The other equivalence in the theorem is a 
    direct consequence of 
    Remark \ref{rem:neghyltl}.\qed
\end{proof}

\section{Conclusion}
%\Jonni{The following needs to be revised so that it is self contained and understandable to a reader.}

In this article we explored 
the connections in expressivity between 
extensions of linear temporal logic under 
set-based team semantics ($\teamltl^l$)
and fragments of linear temporal logic 
extended with trace quantifiers (\hyltl).
We showed that $\teamltl^l$, 
when extended with the Boolean disjunction, 
corresponds to the positive Boolean closure of 
the one variable universal fragment of \hyltl. 
Furthermore we considered 
a fragment of $\teamltl^l$ extended with the Boolean negation,
where the formulae are restricted to 
not to contain the Boolean negation
on the left-hand side of the `until' operator ($\U$) or
under the `always going to' ($\G$) operator.
We showed a correspondence between that fragment and 
the Boolean closure of 
the one variable universal fragment of \hyltl.
From our results it follows that 
the logics considered are all 
true extensions of \LTL. 
Decidability of the model checking and satisfaction problems 
for the team based logics was shown in \cite{KontinenSV23},
and by our correspondence results
(and the translation implied by the proofs of the theorems), 
decidability of the problems 
extends to the hyperlogics in question as well.
It is fascinating to see that the restriction to
left downward closed formulae in 
the latter correspondence on the team logic side
disappears on the hyperlogic side.
This hints at that the fragment considered is intuitive.
It is an open question whether the downward closed
fragment of $\teamltl^l(\sim)$ is $\teamltl^l(\clor)$,
or if some restricted use of the Boolean negation
could be allowed and still maintain downward closure.
Another open question is whether 
an analogous correspondence exists for 
the full logic $\teamltl^l(\sim)$, 
or even for some lesser restriction of the logic 
than the left-downward closed fragment.

\begin{credits}
    \textbf{Acknowledgements.}
    This work was supported by 
    the Academy of Finland grant 345634 and 
    the DFG grant VI 1045/1-1.
\end{credits}
\bibliographystyle{plainurl}
\bibliography{biblio}

\end{document}